\documentclass[journal]{IEEEtran}

\ifCLASSINFOpdf
\else
\usepackage[dvips]{graphicx}
\fi
\usepackage{url}
\usepackage{graphicx}
\usepackage{amsmath,epsfig,amsfonts,amssymb,graphics,psfrag,amsthm,calc,url,bm,cite,subcaption}
\usepackage{caption}
\usepackage{soul}
\usepackage{algorithm,algpseudocode}
\usepackage{mdframed}
\usepackage{verbatim}
\usepackage{empheq}
\usepackage{booktabs}

\newtheorem{lemma}{Lemma}

\newcommand{\trans}{\mathrm{T}}
\newcommand{\herm}{\mathrm{H}}

\definecolor{orange}{RGB}{255,107,0}

\begin{document}
	\title{Symbol-Level Precoding for Continuous-Aperture ISAC Systems}
	
	\author{\IEEEauthorblockN{Hongli Liu and Qiang Li}
		\thanks{H. Liu and Q. Li are with School of Information and Communication  Engineering, University of Electronic Science and Technology of China, Chengdu, P.~R.~China, 611731. ({\it Corresponding author}: Q. Li, E-mail: lq@uestc.edu.cn.)
		}
		
	}
	
	\maketitle

	\begin{abstract}
		Continuous-aperture arrays (CAPAs) offer rich electromagnetic degrees of freedom for integrated sensing and communication (ISAC), but optimizing continuous current distributions leads to challenging infinite-dimensional problems. This paper investigates a CAPA-enabled downlink ISAC system with symbol-level precoding and receive polarization combining. The transmit current is optimized to maximize weighted target-illumination power while enforcing constructive-interference constraints for communication users. To address the resulting infinite-dimensional nonconvex problem, we establish that the optimal current distribution lies in a finite-dimensional subspace spanned by the communication and sensing electromagnetic responses. This result yields an exact, structure-preserving reformulation in terms of finite-dimensional coefficients. A penalty projected-gradient algorithm is then developed to jointly optimize the current coefficients and polarization combiners. Simulation results demonstrate that the proposed framework achieves higher sensing utility and improved communication reliability than conventional Fourier-basis CAPA and spatially discrete array baselines.
	\end{abstract}

	\IEEEpeerreviewmaketitle

	\section{Introduction}

\IEEEPARstart{I}{ntegrated} sensing and communication (ISAC) has emerged as a key paradigm for future wireless networks, enabling communication and radar sensing to share spectrum, hardware, and transmitted waveforms~\cite{mollahosseini2025integrated}. Conventional ISAC systems rely on multiple-input multiple-output (MIMO) arrays to provide spatial degrees of freedom for joint communication and sensing beamforming. As antenna arrays become increasingly large and dense, MIMO architectures naturally approach their continuous-aperture limit, known as continuous-aperture arrays (CAPAs). By replacing discrete antenna weights with continuous aperture excitation, CAPAs can exploit richer electromagnetic degrees of freedom in field distribution and polarization, enabling more flexible ISAC waveform design.

CAPAs have recently attracted growing attention in both communication and sensing. For communication, finite-basis pattern design has been investigated for multi-user CAPA-MIMO systems~\cite{zhang2023pattern}, while low-complexity continuous beamforming has been developed beyond conventional Fourier-basis representations~\cite{wang2025beamforming}. For sensing, vector-valued continuous aperture excitation has enabled tri-polarized direction and attitude estimation~\cite{si2025joint} and Cram\'er--Rao-bound-oriented source-current optimization for near-field sensing~\cite{jiang2025cramer}. These studies show that CAPAs are not merely high-resolution antenna arrays, but continuous electromagnetic platforms capable of flexibly shaping information-bearing and sensing-oriented fields.

The above potential naturally motivates CAPA-enabled ISAC. Recent works have optimized continuous beamforming to characterize communication--sensing tradeoffs~\cite{zhang2025exploiting} and analyzed CAPA-based ISAC over fading channels using continuous-operator models~\cite{zhao2026continuous}. However, these studies mainly focus on rate-oriented performance analysis or block-level beamforming. The instantaneous symbol structure of communication signals, which provides additional waveform-level degrees of freedom for ISAC, remains largely unexplored in CAPA electromagnetic design.

Symbol-level precoding (SLP) offers a natural way to exploit such instantaneous symbol information. Existing continuous-aperture waveform optimization has been used to enhance sensing beampatterns while suppressing multiuser interference~\cite{ye2026optimal}. In contrast, constructive-interference (CI)-based SLP does not simply suppress interference; instead, it exploits constellation geometry to steer multiuser interference toward regions that facilitate reliable symbol detection~\cite{masouros2015exploiting}. This feature makes CI-based SLP particularly appealing for CAPA-enabled ISAC, where the continuous aperture can be optimized at the symbol level to support both communication detection and sensing illumination.

Motivated by this, we develop an SLP framework for a CAPA-enabled downlink ISAC system with continuous vector-current excitation and receive polarization combining. The transmit current is optimized at the symbol level to satisfy the CI requirements of communication users while maximizing the weighted illumination power toward multiple sensing targets. This leads to an infinite-dimensional nonconvex functional optimization problem coupled with polarization-combiner design. By exploiting the electromagnetic-response structure of the CAPA system, we show that an optimal current can be exactly represented in a finite-dimensional subspace spanned by the communication and sensing responses. The resulting structure-preserving reformulation is solved via a penalty projected-gradient algorithm. Numerical results demonstrate that the proposed response-subspace CAPA design outperforms conventional Fourier-basis CAPA and spatially discrete array (SPDA) benchmarks in both sensing illumination and communication reliability.

	\section{System Model and Problem Formulation}
	\label{sec:system_model}
	
	We consider a downlink CAPA-aided ISAC system, as shown in Fig.~\ref{fig:system}, where a base station (BS) equipped with a continuous planar aperture simultaneously serves $K$ communication users and illuminates $Q$ sensing targets. The BS controls the continuous current distribution over the aperture, and each communication user employs a tri-polarized receiver to combine the received electric-field components.
	
	\begin{figure}[t]
		\centering
		\includegraphics[width=0.8\linewidth]{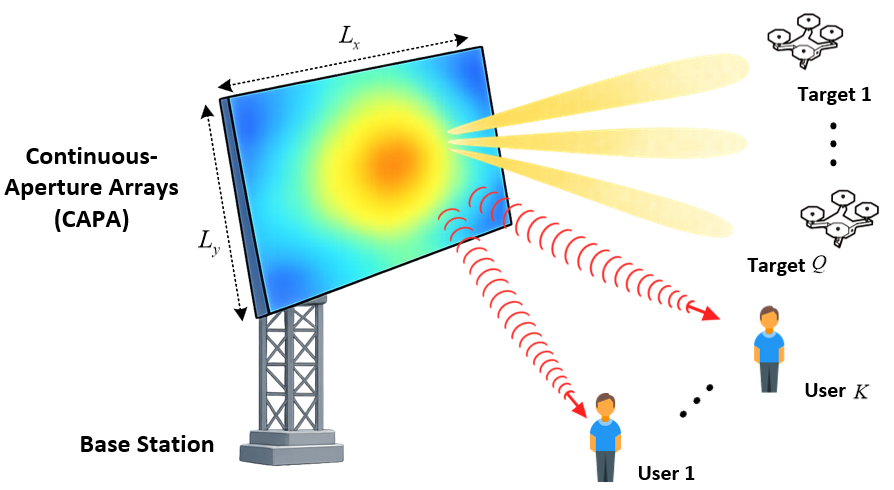}
		\caption{System model of CAPA--aided ISAC.}
		\label{fig:system}
	\end{figure}
	
	\subsection{Continuous-Aperture Transmit Model}
	\label{subsec:transmit_model}
The transmit aperture is modeled as a rectangular continuous surface
located on the \(x\)-\(y\) plane, with side lengths \(L_x\) and \(L_y\),
and centered at the origin, i.e.
	\begin{equation}
		\mathcal S_T
		=
		\left\{
		\mathbf s=[s_x,\, s_y,\, 0]^\trans:
		|s_x| \leq \frac{L_x}{2},~
		|s_y| \leq \frac{L_y}{2}
		\right\},
		\label{eq:aperture_region}
	\end{equation}
	where $L_x$ and $L_y$ denote the aperture lengths along the $x$- and $y$-axes, respectively.
	
At symbol interval \(t\), the transmit excitation over the aperture
is represented by the equivalent vector current distribution
	\begin{equation}
		\mathbf j_t(\mathbf s)
		=[j_{x,t}(\mathbf s), \ j_{y,t}(\mathbf s), \ j_{z,t}(\mathbf s)]^\trans
		\in\mathbb C^3,
		\quad
		\mathbf s\in\mathcal S_T,
		\label{eq:current_distribution}
	\end{equation}
	where the three entries describe the controllable vector-current components
	of the tri-polarized continuous-aperture transmitter. The excitation power at symbol
	interval \(t\) is defined as
	\begin{equation}
		P_t
		=
		\int_{\mathcal S_T}
		\|\mathbf j_t(\mathbf s)\|_2^2
		d\mathbf s .
		\label{eq:instant_power}
	\end{equation}
	For a block of $T$ symbol intervals, the average transmit-power
	constraint is imposed as
	\begin{equation}
		\frac{1}{T}\sum_{t=1}^{T}
		\int_{\mathcal S_T}
		\|\mathbf j_t(\mathbf s)\|_2^2
		d\mathbf s
		\le
		P_{\max}.
		\label{eq:power_constraint}
	\end{equation}

	\subsection{Communication Model and CI Constraints}
	\label{subsec:comm_metric}
	
	We assume that the communication users are located in the far-field region of the CAPA.
	Following the far-field approximation of the dyadic Green's function~\cite{ye2026optimal}, the electromagnetic channel kernel from the continuous aperture to user $k$ is given by
	\begin{equation}
		\boldsymbol\Gamma_k(\mathbf s)
		=
		\frac{\alpha_0}{R_k}
		e^{-j k_0 R_k}
		e^{j k_0\mathbf q_k^\trans \mathbf s}
		\left(
		\mathbf I_3-\mathbf q_k\mathbf q_k^\trans
		\right)
		\in\mathbb C^{3\times 3},
		\label{eq:comm_kernel}
	\end{equation}
	where $k_0=2\pi/\lambda$, $R_k$ denotes the user distance, $\alpha_0$ is an EM wave propagation-scaling constant, and $\mathbf q_k=\mathbf q(\theta_k,\phi_k)$ is the unit direction of user $k$. The function $\mathbf q(\theta,\phi)$ is defined as
	\begin{equation}
		\mathbf q(\theta,\phi)
		=
		[\cos\theta\sin\phi,~\sin\theta\sin\phi,~\cos\phi]^\trans,
	\end{equation}
	where $\theta$ and $\phi$ denote the azimuth and broadside angles, respectively.
	The received electric field at user $k$ is 
	\begin{equation}
		\mathbf e_k[t]
		=
		\int_{\mathcal S_T}
		\boldsymbol\Gamma_k(\mathbf s)
		\mathbf j_t(\mathbf s)
		d\mathbf s
		\in\mathbb C^3 .
		\label{eq:user_field}
	\end{equation}
	Each user employs a tri-polarized receiver to combine the three electric-field components. The combined received signal is given by
	\begin{equation}
		y_k[t]
		=
		\boldsymbol\psi_k^H\mathbf e_k[t]
		+
		n_k[t],
		\label{eq:received_signal}
	\end{equation}
	where $\boldsymbol\psi_k\in\mathbb C^3$ denotes the receive polarization combiner of user $k$ with $\|\boldsymbol\psi_k\|_2=1$, and $n_k[t]\sim\mathcal{CN}(0,\sigma_k^2)$ denotes the effective receiver noise.
	Accordingly, the noiseless received signal is
	\begin{equation}
		\tilde y_k[t]
		=
		\boldsymbol\psi_k^H
		\int_{\mathcal S_T}
		\boldsymbol\Gamma_k(\mathbf s)
		\mathbf j_t(\mathbf s)
		d\mathbf s .
		\label{eq:noiseless_signal}
	\end{equation}
	
	We consider $M$-phase-shift-keying (PSK) signaling for all communication users, where $s_k[t]=|s_k[t]|e^{j\angle s_k[t]}$ denotes the intended symbol of user $k$ at symbol time $t$. Since PSK detection depends on the signal phase, the noiseless received signal is rotated to the desired symbol direction as
	\begin{equation}
		z_k[t]
		=
		e^{-j\angle s_k[t]}
		\tilde y_k[t].
		\label{eq:rotated_signal}
	\end{equation}	
	Let $\Phi=\pi/M$ denote the  half-angle of the \(M\)-PSK decision region. Following the
	CI principle~\cite{masouros2015exploiting}, the received signal is required
	to remain inside the desired decision region with a prescribed safety
	margin \(\beta_k>0\), i.e.,
	\begin{equation}
		\Re\{z_k[t]\}\sin\Phi
		-
		|\Im\{z_k[t]\}|\cos\Phi
		\ge
		\beta_k,
		\quad
		\forall k,t.
		\label{eq:ci_constraint_abs}
	\end{equation}
	The parameter $\beta_k$ controls the robustness of user \(k\)
	against receiver noise and residual interference. 
	The CI constraint in \eqref{eq:ci_constraint_abs} can then be cast into a real and linear form:
	\begin{equation}
		\Re\{\eta_m z_k[t]\}
		\ge
		\beta_k,
		\quad
		m=1,2,\ \forall k,t,
		\label{eq:ci_two_boundary}
	\end{equation}
	where $\eta_1=\sin\Phi+j\cos\Phi$ and $\eta_2=\sin\Phi-j\cos\Phi$.

	\subsection{Sensing Model and Illumination Metric}
	\label{subsec:sensing_metric}
We consider a multi-target illumination scenario, where the sensing kernel toward target $q$ is modeled similarly to that of the communication users as
	\begin{equation}
		\boldsymbol\Gamma_q^{\rm s}(\mathbf s)
		=
		e^{j k_0\mathbf q_q^\trans \mathbf s}
		\left(
		\mathbf I_3-\mathbf q_q\mathbf q_q^\trans
		\right)
		\in\mathbb C^{3\times 3},
		\label{eq:sensing_kernel}
	\end{equation}
	where $\mathbf q_q=\mathbf q(\theta_q,\phi_q)$ denotes the unit direction of target $q$. The target-dependent propagation attenuation is omitted here as it is absorbed into the
	corresponding sensing weight introduced below.

	The radiated electric-field toward target $q$ at  time $t$ is
	\begin{equation}
		\mathbf E_q[t]
		=
		\int_{\mathcal S_T}
		\boldsymbol\Gamma_q^{\rm s}(\mathbf s)
		\mathbf j_t(\mathbf s)
		d\mathbf s
		\in\mathbb C^3 .
		\label{eq:sensing_field}
	\end{equation}
	
	To quantify the overall multi-target illumination performance, we define the sensing utility as
	\begin{equation}
		U_{\rm sen}
		=
		\sum_{t=1}^{T}
		\sum_{q=1}^{Q}
		\omega_q
		\|\mathbf E_q[t]\|_2^2,
		\label{eq:sensing_utility}
	\end{equation}
where \(\omega_q\geq0\) specifies the sensing priority of target
\(q\).
The utility in \eqref{eq:sensing_utility} measures the total
electric-field energy delivered toward the prescribed sensing
directions over the entire symbol block.

	\subsection{Problem Formulation}
	\label{subsec:problem_formulation}

	Based on the above communication and sensing metrics, we formulate a symbol-level ISAC design problem over the continuous aperture, where the current distributions $\{\mathbf j_t(\mathbf s)\}_{t=1}^{T}$ and receive polarization combiners $\{\boldsymbol\psi_k\}_{k=1}^{K}$ are jointly optimized. The goal is to maximize the weighted sensing illumination utility while satisfying the CI constraints for all communication users:
	\begin{subequations}
		\label{eq:continuous_problem}
		\begin{align}
	\hspace{-10pt}		\mathcal P_0:\,
			\max_{\{\mathbf j_t(\mathbf s)\},\{\boldsymbol\psi_k\}}
			\quad
			&
			U_{\rm sen}
			\label{eq:continuous_problem_obj}
			\\
			\mathrm{s.t.}\quad
			&
			\Re\{\eta_m z_k[t]\}
			\ge
			\beta_k,
			\,
			m=1,2,\ \forall k,t,
			\label{eq:continuous_problem_ci}
			\\
			&
			\sum_{t=1}^{T}
			\int_{\mathcal S_T}
			\|\mathbf j_t(\mathbf s)\|_2^2
			d\mathbf s
			\le
			TP_{\max},
			\label{eq:continuous_problem_power}
			\\
			&
			\|\boldsymbol\psi_k\|_2=1,
			\quad
			\forall k.
			\label{eq:continuous_problem_combiner}
		\end{align}
	\end{subequations}
Problem $\mathcal P_0$ is an infinite-dimensional nonconvex functional optimization problem due to the continuous current distributions
\(\{\mathbf j_t(\mathbf s)\}\) and their coupling with the receive polarization combiners. To make the problem tractable, we next exploit the electromagnetic-response structure of the CAPA system to obtain an equivalent finite-dimensional reformulation.

\section{Subspace-Based Optimization Approach}
\label{sec:representer_approach}

This section develops a response-subspace optimization approach for problem $\mathcal P_0$. We first show that the continuous current can be restricted, without loss of optimality, to a finite-dimensional subspace induced by the communication and sensing kernels. Based on this result, we reformulate $\mathcal P_0$ as a compact coefficient optimization problem and then solve it using a penalty projected-gradient algorithm.
	\subsection{Electromagnetic-Response Subspace}
	\label{subsec:channel_subspace}
	
	We first collect the aperture-domain electromagnetic responses that enter the communication and sensing functionals. Specifically, define
	\begin{equation*} 
		\mathbf G(\mathbf s)
		=
		[
		\boldsymbol\Gamma_1^\herm(\mathbf s),
		\ldots,
		\boldsymbol\Gamma_K^\herm(\mathbf s),
		\left(\boldsymbol\Gamma_1^{\rm s}(\mathbf s)\right)^\herm,
		\ldots,
		\left(\boldsymbol\Gamma_Q^{\rm s}(\mathbf s)\right)^\herm
		].
		\label{eq:response_matrix}
	\end{equation*}
	The corresponding electromagnetic-response subspace is then defined as
	\begin{equation}
		\mathcal R
		=
		\left\{
		\mathbf j(\mathbf s):
		\mathbf j(\mathbf s)
		=
		\mathbf G(\mathbf s)\mathbf c,\ 
		\mathbf c\in\mathbb C^{3(K+Q)}
		\right\}.
		\label{eq:channel_response_subspace}
	\end{equation}
	This subspace collects the current components that are observable through the communication and sensing response kernels. Its dimension is
	\begin{equation*}
		D
		\triangleq
		\dim(\mathcal R)
		\le
		3(K+Q),
		\label{eq:D_definition}
	\end{equation*}
	which equals the number of linearly independent aperture-domain response functions.

\begin{lemma}\label{lemma1}
	Suppose that problem $\mathcal P_0$ is feasible and the optimal value $U_{\rm sen}^\star>0$. Then, every optimal transmit current of $\mathcal P_0$ lies in the electromagnetic-response subspace $\mathcal R$. Specifically, any optimal current can be represented as
	\begin{equation}
		\mathbf j_t^\star(\mathbf s)
		=
		\mathbf G(\mathbf s)\mathbf c_t^\star,
		\quad
		t=1,\ldots,T,
		\label{eq:optimal_current_subspace}
	\end{equation}
	for some finite-dimensional coefficient vectors
	$\mathbf c_t^\star\in\mathbb C^{3(K+Q)}$.
\end{lemma}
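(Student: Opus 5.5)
Fix an optimal solution $(\{\mathbf j_t^\star\},\{\boldsymbol\psi_k^\star\})$ of $\mathcal P_0$. We work in the Hilbert space $\mathcal H=L^2(\mathcal S_T;\mathbb C^3)$ with inner product $\langle \mathbf f,\mathbf g\rangle=\int_{\mathcal S_T}\mathbf g^\herm(\mathbf s)\mathbf f(\mathbf s)d\mathbf s$. The subspace $\mathcal R$ is finite-dimensional, hence closed, so each current decomposes orthogonally as $\mathbf j_t^\star=\mathbf j_t^{\parallel}+\mathbf j_t^{\perp}$ with $\mathbf j_t^{\parallel}\in\mathcal R$ and $\mathbf j_t^{\perp}\perp\mathcal R$. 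The argument is a representer-theorem type argument: the orthogonal part is invisible to every functional in the problem, yet it consumes power. We show that at an optimum this part must vanish.

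\textbf{Step 1: invisibility of $\mathbf j_t^\perp$.} The $i$-th row of $\boldsymbol\Gamma_k(\mathbf s)$ is the conjugate transpose of the $i$-th column of $\boldsymbol\Gamma_k^\herm(\mathbf s)$, and that column is an element of $\mathcal R$. Hence
\[
\int_{\mathcal S_T}\boldsymbol\Gamma_k(\mathbf s)\mathbf j_t^\perp(\mathbf s)d\mathbf s=\mathbf 0 ,
\]
and likewise for every $\boldsymbol\Gamma_q^{\rm s}$. Consequently $\mathbf e_k[t]$, $z_k[t]$ and $\mathbf E_q[t]$ depend only on $\mathbf j_t^\parallel$. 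Replacing $\mathbf j_t^\star$ by $\mathbf j_t^\parallel$, with the combiners held fixed, therefore leaves the CI constraints and $U_{\rm sen}$ unchanged. By Pythagoras the power becomes $\sum_t\|\mathbf j_t^\parallel\|^2=\sum_t\|\mathbf j_t^\star\|^2-\sum_t\|\mathbf j_t^\perp\|^2$.

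\textbf{Step 2: strict improvement by scaling (main step).} Suppose, for contradiction, that $\delta\triangleq\sum_t\|\mathbf j_t^\perp\|^2>0$. The projected solution then has power at most $TP_{\max}-\delta$. Set $\kappa=\sqrt{TP_{\max}/(TP_{\max}-\delta)}>1$ and use the scaled currents $\kappa\mathbf j_t^\parallel$. These satisfy the power constraint. All fields scale linearly by $\kappa$. Since $\beta_k>0$ and $\kappa>1$,
\[
\Re\{\eta_m\kappa z_k[t]\}=\kappa\,\Re\{\eta_m z_k[t]\}\ge\kappa\beta_k\ge\beta_k ,
\]
so CI feasibility is preserved. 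The objective becomes $\kappa^2U_{\rm sen}^\star>U_{\rm sen}^\star$, using $U_{\rm sen}^\star>0$. This contradicts optimality, so $\mathbf j_t^\perp=\mathbf 0$ for all $t$.

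This is exactly where the hypotheses enter. Positivity of $U_{\rm sen}^\star$ makes the scaling a strict improvement; without it, adding arbitrary orthogonal components would give non-unique optima outside $\mathcal R$. Positivity of $\beta_k$, together with the homogeneity of the CI constraints, lets feasibility survive the scaling. A subtle point to check is that the objective is evaluated with the original combiners. This is harmless because $U_{\rm sen}$ does not involve $\boldsymbol\psi_k$, and the scaled solution uses the same $\boldsymbol\psi_k^\star$.

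\textbf{Step 3: finite representation.} Since $\mathbf j_t^\star\in\mathcal R$, the definition of $\mathcal R$ gives $\mathbf j_t^\star(\mathbf s)=\mathbf G(\mathbf s)\mathbf c_t^\star$ for some $\mathbf c_t^\star\in\mathbb C^{3(K+Q)}$. The vector $\mathbf c_t^\star$ may be non-unique when $D<3(K+Q)$, which the statement allows. Equality holds almost everywhere on $\mathcal S_T$, which is the natural sense in $L^2$.
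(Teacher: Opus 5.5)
Your proof is correct and follows essentially the same route as the paper. Both arguments use the orthogonal decomposition onto $\mathcal R$, the invisibility of the orthogonal component to every response kernel, and the Pythagorean power saving. Both then scale by a factor larger than one, which keeps the CI constraints feasible (via $\beta_k>0$) while strictly increasing $U_{\rm sen}$ (via $U_{\rm sen}^\star>0$), contradicting optimality.

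One detail to add: your $\kappa$ is well defined because $TP_{\max}-\delta\ge\sum_t\|\mathbf j_t^{\parallel}\|^2>0$. The last inequality holds since $\mathbf j_t^{\parallel}\equiv\mathbf 0$ for all $t$ would force $U_{\rm sen}^\star=0$.
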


\begin{proof}
	For any current distribution $\mathbf j_t(\mathbf s)$, we can decompose it as
	\begin{equation*}
		\mathbf j_t(\mathbf s)
		=
		\mathbf G(\mathbf s)\mathbf c_t
		+
		\boldsymbol\delta_t(\mathbf s),
		\label{eq:current_decomposition}
	\end{equation*}
	where $\mathbf G(\mathbf s)\mathbf c_t\in\mathcal R$ and
	$\boldsymbol\delta_t(\mathbf s)\in\mathcal R^\perp$. By the definition of $\mathcal R^\perp$,
	\begin{equation*}
		\int_{\mathcal S_T}
		\mathbf G^\herm(\mathbf s)
		\boldsymbol\delta_t(\mathbf s)
		\,{\rm d}\mathbf s
		=
		\mathbf 0 .
		\label{eq:orthogonal_component}
	\end{equation*}
	Equivalently,
	\begin{equation*}
		\int_{\mathcal S_T}
		\boldsymbol\Gamma_k(\mathbf s)
		\boldsymbol\delta_t(\mathbf s)
		\,{\rm d}\mathbf s
		=
		\mathbf 0,~ \forall k,
		~~
		\int_{\mathcal S_T}
		\boldsymbol\Gamma_q^{\rm s}(\mathbf s)
		\boldsymbol\delta_t(\mathbf s)
		\,{\rm d}\mathbf s
		=
		\mathbf 0,~ \forall q.
		\label{eq:orthogonal_response_zero}
	\end{equation*}
	Therefore, $\boldsymbol\delta_t(\mathbf s)$ does not affect any communication received signal or sensing field. Removing it preserves the sensing utility and all CI constraints.
	
	Moreover, since $\mathbf G(\mathbf s)\mathbf c_t$ and
	$\boldsymbol\delta_t(\mathbf s)$ are orthogonal, the transmit power satisfies
	\begin{equation}
		\int_{\mathcal S_T}
		\|\mathbf j_t(\mathbf s)\|_2^2
		\,{\rm d}\mathbf s
		=
		\int_{\mathcal S_T}
		\|\mathbf G(\mathbf s)\mathbf c_t\|_2^2
		\,{\rm d}\mathbf s
		+
		\int_{\mathcal S_T}
		\|\boldsymbol\delta_t(\mathbf s)\|_2^2
		\,{\rm d}\mathbf s .
		\label{eq:power_decomposition}
	\end{equation}
	Hence, the orthogonal component only consumes transmit power without contributing to either the communication or sensing responses.
	
	Now consider an arbitrary optimal solution
	$\{\mathbf j_t^\star(\mathbf s)\}_{t=1}^T$ and suppose, for contradiction, that its orthogonal component is nonzero for at least one symbol interval. Let
	\begin{equation*}
		\bar{\mathbf j}_t(\mathbf s)
		=
		\mathbf G(\mathbf s)\mathbf c_t^\star
	\end{equation*}
	denote the response-subspace component of $\mathbf j_t^\star(\mathbf s)$. From \eqref{eq:power_decomposition}, we have
	\begin{equation*}
		\sum_{t=1}^{T}
		\int_{\mathcal S_T}
		\|\bar{\mathbf j}_t(\mathbf s)\|_2^2
		\,{\rm d}\mathbf s
		<
		\sum_{t=1}^{T}
		\int_{\mathcal S_T}
		\|\mathbf j_t^\star(\mathbf s)\|_2^2
		\,{\rm d}\mathbf s
		\leq
		TP_{\max}.
		\label{eq:strict_power_reduction}
	\end{equation*}
	Thus, there exists a scalar $\alpha>1$ such that the scaled currents
	$\{\alpha\bar{\mathbf j}_t(\mathbf s)\}_{t=1}^T$ still satisfy the transmit-power constraint. Since all communication and sensing responses are linear in the current, the CI left-hand sides are scaled by $\alpha$ and therefore remain feasible. Meanwhile, the sensing utility is scaled by $\alpha^2$. Because $U_{\rm sen}^\star>0$, this gives
	\begin{equation*}
		U_{\rm sen}\left(\{\alpha\bar{\mathbf j}_t\}\right)
		=
		\alpha^2
		U_{\rm sen}^\star
		>
		U_{\rm sen}^\star,
	\end{equation*}
	which contradicts the optimality of
	$\{\mathbf j_t^\star(\mathbf s)\}_{t=1}^{T}$. Therefore, the orthogonal component of every optimal transmit current must be zero, and every optimal current admits the representation in \eqref{eq:optimal_current_subspace}.
\end{proof}

	\subsection{Finite-Dimensional Reformulation of Problem ${\cal P}_0$}
	\label{subsec:basis_finite_reformulation}
With Lemma~\ref{lemma1}, we can convert problem ${\cal P}_0$ into a finite-dimensional problem. Specifically, define the response correlation matrix
	\begin{equation*}
		\mathbf C
		=
		\int_{\mathcal S_T}
		\mathbf G^\herm(\mathbf s)\mathbf G(\mathbf s)
		d\mathbf s
		\in\mathbb C^{3(K+Q)\times 3(K+Q)},
		\label{eq:response_correlation_matrix}
	\end{equation*}
	and let $\mathbf V_D$ and $\boldsymbol\Lambda_D=\operatorname{diag}(\lambda_1,\ldots,\lambda_D) \succ \bm 0$ be the matrices consisting of the eigenvectors and corresponding  positive eigenvalues of $\mathbf C$, respectively. Then, an orthonormal basis of $\mathcal R$ can be constructed as
	\begin{equation*}
		\boldsymbol\Xi(\mathbf s)
		=
		\mathbf G(\mathbf s)
		\mathbf V_D
		\boldsymbol\Lambda_D^{-1/2}
		=
		[
		\boldsymbol\xi_1(\mathbf s),
		\ldots,
		\boldsymbol\xi_D(\mathbf s)
		]
		\in\mathbb C^{3\times D},
		\label{eq:orthonormal_basis_matrix}
	\end{equation*}
which satisfies
$
		\int_{\mathcal S_T}
		\boldsymbol\Xi^\herm(\mathbf s)
		\boldsymbol\Xi(\mathbf s)
		d\mathbf s
		=
		\mathbf I_D .
$
	According to Lemma~\ref{lemma1}, the current distribution can be represented without loss of optimality as
	\begin{equation}
		\mathbf j_t(\mathbf s)
		=
		\boldsymbol\Xi(\mathbf s)\mathbf x_t,
		\quad
		\mathbf x_t\in\mathbb C^D,
		\quad
		t=1,\ldots,T,
		\label{eq:current_finite_expansion}
	\end{equation}
and the total transmit power reduces to
		\begin{equation}
		\sum_{t=1}^{T}
		\int_{\mathcal S_T}
		\|\mathbf j_t(\mathbf s)\|_2^2 d\mathbf s
		=
		\|\mathbf X\|_F^2,
		\label{eq:finite_power}
	\end{equation}
	where $\mathbf X=[\mathbf x_1,\ldots,\mathbf x_T]\in\mathbb C^{D\times T}$. By substituting~\eqref{eq:current_finite_expansion} and \eqref{eq:finite_power} into \eqref{eq:continuous_problem}, problem ${\cal P}_0$ can be reformulated as
	\begin{subequations}
		\label{eq:finite_problem}
		\begin{align}
			\mathcal P_1:
			\max_{\mathbf X,\{\boldsymbol\psi_k\}}
			&
			\sum_{t=1}^{T}
			\sum_{q=1}^{Q}
			\omega_q
			\|\mathbf A_q\mathbf x_t\|_2^2
			\label{eq:finite_problem_obj}
			\\
			\mathrm{s.t.}\quad
			&
			\Re \big\{
			\eta_m
			e^{-j\angle s_k[t]}
			\boldsymbol\psi_k^\herm
			\mathbf H_k\mathbf x_t
			\big\}
			\ge
			\beta_k,
			\, \forall m, k,t,
			\label{eq:finite_problem_ci}
			\\
			&
			\|\mathbf X\|_F^2
			\le
			TP_{\max},
			\label{eq:finite_problem_power}
			\\
			&
			\|\boldsymbol\psi_k\|_2=1,
			\quad
			\forall k,
			\label{eq:finite_problem_combiner}
		\end{align}
	\end{subequations}
where  $\mathbf A_q
=
\int_{\mathcal S_T}
\boldsymbol\Gamma_q^{\rm s}(\mathbf s)
\boldsymbol\Xi(\mathbf s)
\,\mathrm d\mathbf s
\in\mathbb C^{3\times D},~\forall~q$ and $\mathbf H_k
=
\int_{\mathcal S_T}
\boldsymbol\Gamma_k(\mathbf s)
\boldsymbol\Xi(\mathbf s)
\,\mathrm d\mathbf s
\in\mathbb C^{3\times D},~\forall~k$.

Problem \(\mathcal P_1\) involves only finite-dimensional variables and is therefore more tractable than the original functional problem \(\mathcal P_0\). Once a solution to \(\mathcal P_1\) is obtained, the corresponding continuous current distribution can be recovered directly from~\eqref{eq:current_finite_expansion}. We next develop an efficient solution for \(\mathcal P_1\).

	\subsection{Penalty Projected Gradient Algorithm}
	\label{subsec:penalty_algorithm}
	
	Although $\mathcal P_1$ is finite-dimensional, it is still nonconvex due to the maximized quadratic sensing objective, the bilinear coupling between $\mathbf x_t$ and $\boldsymbol\psi_k$, and the unit-norm constraints on the receive polarization combiners. To obtain a tractable solution, we develop a penalty projected gradient (PPG) algorithm.
	
	For the CI constraint of user $k$ at symbol time $t$, define the violation function as
	\begin{equation}
		e_{k,t,m}
		=
		\left[
		\beta_k
		-
		\Re\left\{
		\eta_m
		e^{-j\angle s_k[t]}
		\boldsymbol\psi_k^\herm
		\mathbf H_k\mathbf x_t
		\right\}
		\right]_+,
		\quad
		m=1,2,
		\label{eq:ci_violation}
	\end{equation}
	where $[a]_+=\max\{a,0\}$. With a penalty factor $\rho>0$, problem $\mathcal P_1$ is handled through the following penalized problem:
	\begin{equation}
		\label{eq:penalty_problem}
		\begin{aligned}
	\hspace{-10pt}		\mathcal P_2:\quad
			\min_{\mathbf X,\{\boldsymbol\psi_k\}}
			\quad
			&
			-
			\sum_{t=1}^{T}
			\mathbf x_t^\herm
			\mathbf R_{\rm s}
			\mathbf x_t
			+
			\frac{\rho}{2}
			\sum_{k=1}^{K}
			\sum_{t=1}^{T}
			\sum_{m=1}^{2}
			e_{k,t,m}^2
			\\
			\mathrm{s.t.}\quad
			& \eqref{eq:finite_problem_power} \, \text{and} \,  \eqref{eq:finite_problem_combiner} ~{\rm satisfied},
		\end{aligned}
	\end{equation}
	where $\mathbf R_{\rm s}=\sum_{q=1}^{Q}\omega_q\mathbf A_q^\herm\mathbf A_q$. The first term in $\mathcal P_2$ is the negative sensing utility and is handled
	as a nonconvex smooth term by projected gradient descent, while the second term penalizes the CI constraint violations. By increasing $\rho$, the CI constraints are progressively enforced, whereas the power and unit-norm constraints are handled explicitly by projection.
	
	Let $F_\rho(\mathbf X,\{\boldsymbol\psi_k\})$ denote the objective function of
	$\mathcal P_2$. The coefficient matrix is updated by
	\begin{equation}
		\widehat{\mathbf X}
		=
		\mathbf X
		-
		\mu_x
		\nabla_{\mathbf X}F_\rho,
		\label{eq:X_gradient_step}
	\end{equation}
	where $\mu_x>0$ is the stepsize. Then, the transmit-power constraint is enforced by the projection
	\begin{equation}
		\mathbf X^{+}
		=
		\min
		\left\{
		1, \
		\sqrt{
			{TP_{\max}}/{\|\widehat{\mathbf X}\|_F^2}
		}
		\right\}
		\widehat{\mathbf X}.
		\label{eq:X_projection}
	\end{equation}
	
	For each receive polarization combiner $\boldsymbol\psi_k$, the Euclidean gradient is projected onto the tangent space of the unit sphere as
	\begin{equation}
		\mathbf d_k
		=
		\nabla_{\boldsymbol\psi_k}F_\rho
		-
		\Re\left\{
		\boldsymbol\psi_k^\herm
		\nabla_{\boldsymbol\psi_k}F_\rho
		\right\}
		\boldsymbol\psi_k .
		\label{eq:psi_projected_gradient}
	\end{equation}
	Then, the combiner is updated and normalized by
	\begin{equation}
		\boldsymbol\psi_k^{+}
		=
		\frac{
			\boldsymbol\psi_k
			-
			\mu_\psi
			\mathbf d_k
		}
		{
			\left\|
			\boldsymbol\psi_k
			-
			\mu_\psi
			\mathbf d_k
			\right\|_2
		},
		\quad
		k=1,\ldots,K,
		\label{eq:psi_update}
	\end{equation}
	where $\mu_\psi>0$ is the stepsize.
	
	The overall subspace-based penalty projected gradient algorithm is summarized in Algorithm~\ref{alg:penalty_projected}.
	\begin{algorithm}[t]
		\caption{Subspace-Based PPG Algorithm}
		\label{alg:penalty_projected}
		\begin{algorithmic}[1]
			\State Construct $\boldsymbol\Xi(\mathbf s)$, $\{\mathbf A_q\}_{q=1}^{Q}$,
			and $\{\mathbf H_k\}_{k=1}^{K}$.
			\State Initialize $\mathbf X$ with $\|\mathbf X\|_F^2\le TP_{\max}$ and
			$\{\boldsymbol\psi_k:\|\boldsymbol\psi_k\|_2=1\}$.
			\State Set $\rho_{\max}$, $\rho=\rho_0$ and choose $\kappa>1$.
			\Repeat
			\Repeat
			\State Compute the CI violations $\{e_{k,t,m}\}$ by
			\eqref{eq:ci_violation}.
			\State Update $\mathbf X$ by \eqref{eq:X_gradient_step} and
			\eqref{eq:X_projection}.
			\State Update each $\boldsymbol\psi_k$ by
			\eqref{eq:psi_projected_gradient} and \eqref{eq:psi_update}.
			\Until{the penalized objective converges}
			\State Update $\rho\leftarrow \min\{\kappa\rho,\rho_{\max}\}$.
			\Until{the maximum CI violation is below a preset tolerance}
			\State Recover $\mathbf j_t(\mathbf s)=\boldsymbol\Xi(\mathbf s)\mathbf x_t$,
			$t=1,\ldots,T$.
		\end{algorithmic}
	\end{algorithm}
	
	\section{Simulation Results}
	\label{sec:simulation_results}
	In this section, we evaluate the performance of the proposed CAPA-aided ISAC design and optimization algorithm through Monte Carlo simulations with $1,000$ independent trials.
	We consider a downlink ISAC system operating at carrier frequency $f_c=2.4$ GHz, with wavelength $\lambda=c/f_c$. The transmitter is equipped with a square continuous aperture of size $L_x=L_y=0.6~{\rm m}$, and the aperture area is $A_T=L_xL_y$. Unless otherwise specified, the maximum transmit power is set to $P_{\max}=5$. The system serves $K=2$ communication users and illuminates $Q=2$ sensing targets. The symbol block length is $T=4$, and $8$-PSK modulation is adopted. The CI margin is set as $\beta_k=0.05, \forall k$ and the sensing weights are set as $\omega_q=10, \forall q$. The communication users are located at $(\theta_k,\phi_k,R_k)
	=(-40^\circ,20^\circ,20{\rm m}), (40^\circ,20^\circ,20{\rm m})$, where $\theta_k$ and $\phi_k$ denote the azimuth and elevation angles,
	respectively. The two sensing targets are located at $	(\theta_q,\phi_q)=(-45^\circ,45^\circ), (45^\circ,45^\circ)$. 
	The distance-dependent path loss is included in the communication channels. For the penalty-based optimization, the penalty parameter is initialized as $\rho_0=3\times10^2$ and increased by a factor of $1.008$ until it reaches $\rho_{\max}=8\times10^3$. The stepsizes are empirically set as $\mu_x=0.02$ and $\mu_\psi=0.02$. The maximum number of iterations is set to 800. 
	
	We compare the proposed subspace-based method with Fourier-CAPA~\cite{zhang2023pattern} and Digital-SPDA~\cite{sanguinetti2022wavenumber} benchmarks. Fourier-CAPA expands the continuous current over truncated two-dimensional Fourier bases, whereas Digital-SPDA approximates the aperture by a half-wavelength-spaced fully-digital array with the standard effective antenna aperture. For both benchmarks, fixed-polarization versions with $\boldsymbol\psi_{\rm fix}=[1,0,0]^{\mathsf T}$ are also included.

Fig.~\ref{fig:illumination_map} shows the normalized sensing illumination maps over the azimuth--elevation plane. The proposed subspace-CAPA scheme forms focused beams toward the sensing targets and the communication users.  Compared with its fixed-polarized counterpart, subspace-CAPA provides stronger target illumination, demonstrating the gain brought by polarization optimization. In contrast, the digital-SPDA baselines produce weaker and less focused illumination, highlighting the advantage of continuous-aperture current optimization.

	Fig.~\ref{fig:sensing_power} compares the average sensing utility versus transmit power. The proposed subspace-CAPA scheme consistently achieves the highest sensing utility, with a more pronounced gain at larger transmit powers. Fourier-CAPA yields lower utility due to its truncated basis representation, while digital-SPDA performs the worst, confirming the benefit of continuous-aperture current optimization.

Fig.~\ref{fig:ber_snr} shows the BER performance versus receive SNR. The proposed subspace-CAPA scheme achieves the lowest BER and the fastest error-rate decay as the SNR increases. The performance loss of the fixed-polarization variant confirms the importance of polarization optimization, while the higher BERs of Fourier-CAPA and digital-SPDA further demonstrate the advantage of optimizing the current in the electromagnetic-response subspace.

	\begin{figure}[t]
		\centering
		\includegraphics[width=0.95\linewidth, trim=10 10 10 30,clip]{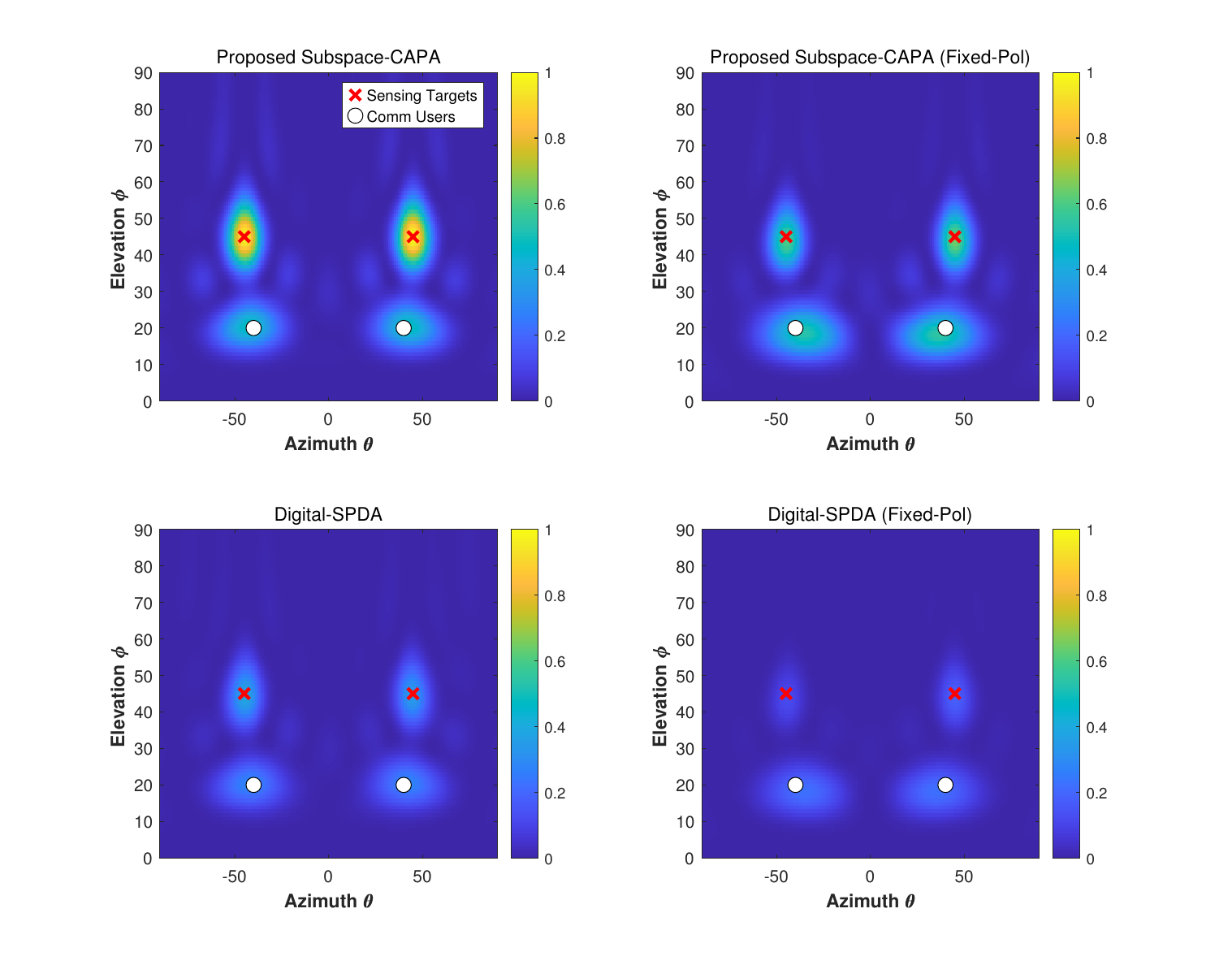}
		\caption{Sensing illumination maps.}
		\label{fig:illumination_map}
	\end{figure}

	\begin{figure}[t]
		\centering
		\includegraphics[width=0.7\linewidth, trim=10 10 10 25,clip]{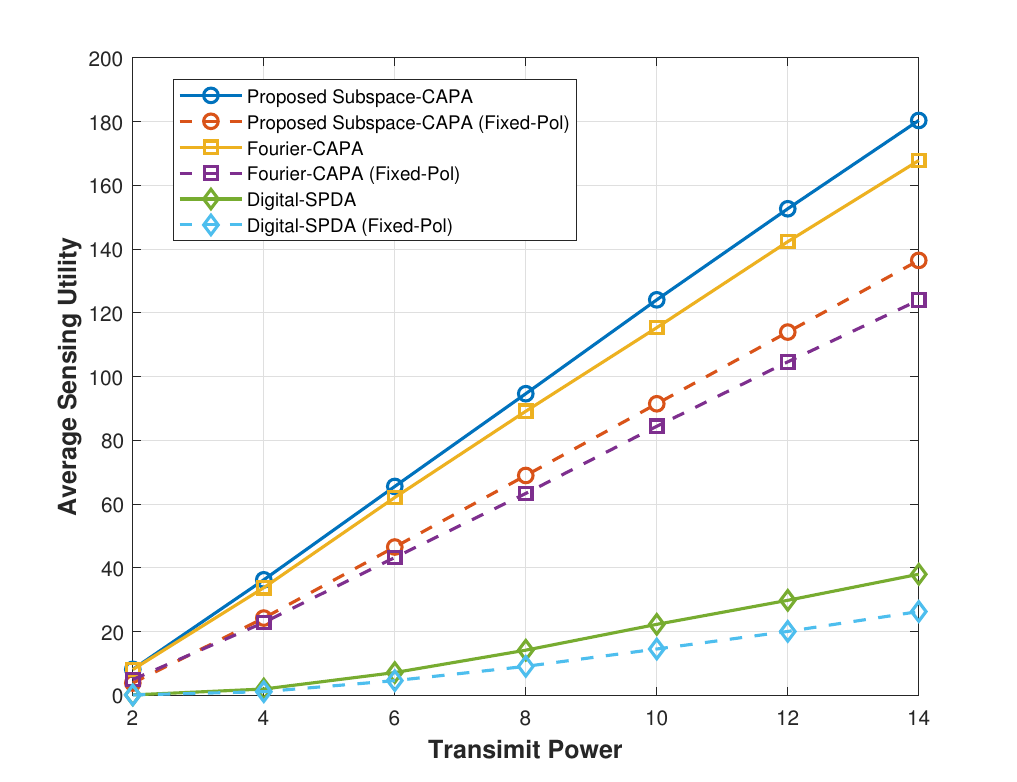}
		\caption{Average sensing utility versus transmit power.}
		\label{fig:sensing_power}
	\end{figure}
	
	\begin{figure}[t]
		\centering
		\includegraphics[width=0.7\linewidth, trim=10 9 10 22,clip]{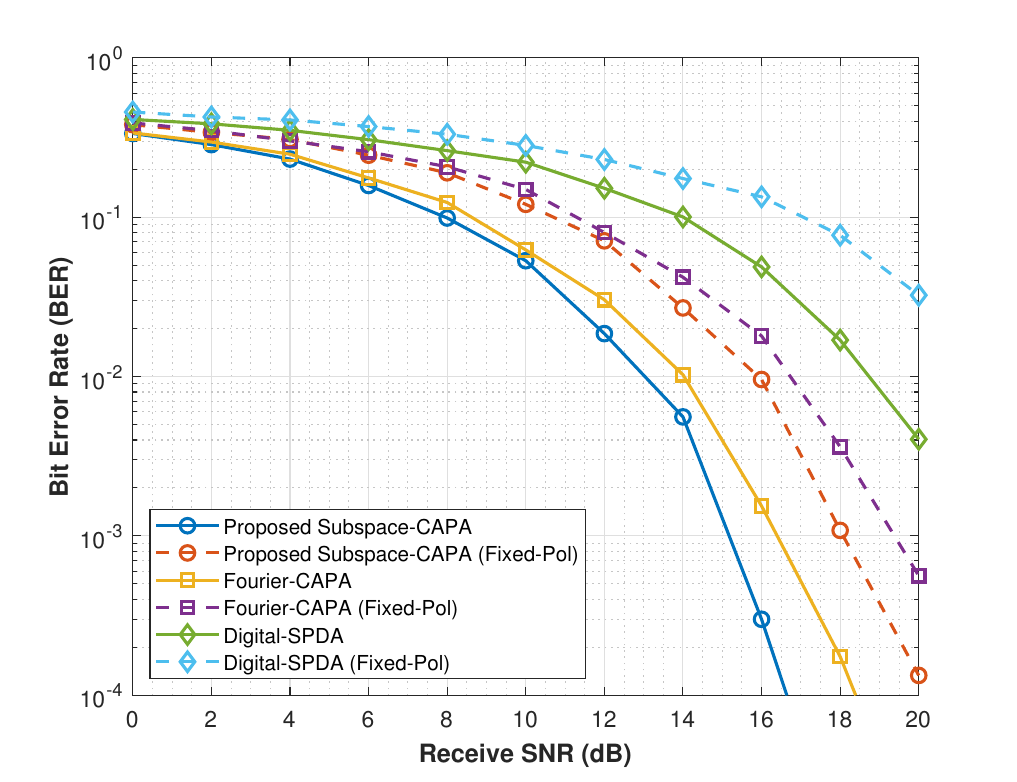}
		\caption{Bit error rate versus receive SNR.}
		\label{fig:ber_snr}
	\end{figure}
	
\section{Conclusion}
In this paper, we investigated a CAPA-enabled downlink ISAC system with SLP and receive polarization combining. By exploiting the electromagnetic-response structure, we reduced the infinite-dimensional current design to a finite-dimensional vector optimization problem and developed a tailored penalty projected-gradient algorithm. Simulation results showed that the proposed method outperforms conventional Fourier-basis and SPDA baselines in both sensing utility and BER performance.	
	
\vspace{-10pt}	
	\bibliographystyle{IEEEtran}
	\bibliography{ref}

@ARTICLE{mollahosseini2025integrated,
	author={Mollahosseini, Poorya and Shen, Ruiyi and Ahmed Khan, Talha and Ghasempour, Yasaman},
	journal={IEEE J. Sel. Topics Electromagn., Antennas Propag.}, 
	title={{Integrated Sensing and Communication: Research Advances and Industry Outlook}}, 
	year={2025},
	volume={1},
	number={1},
	pages={375-392},
	doi={10.1109/JSTEAP.2025.3604368}}

@ARTICLE{zhang2023pattern,
	author={Zhang, Zijian and Dai, Linglong},
	journal={IEEE J. Sel. Areas Commun.}, 
	title={{Pattern-Division Multiplexing for Multi-User Continuous-Aperture MIMO}}, 
	year={2023},
	volume={41},
	number={8},
	pages={2350-2366},
	doi={10.1109/JSAC.2023.3288244}}

@ARTICLE{wang2025beamforming,
	author={Wang, Zhaolin and Ouyang, Chongjun and Liu, Yuanwei},
	journal={IEEE Trans. Wireless Commun.}, 
	title={{Optimal Beamforming for Multi-User Continuous Aperture Array (CAPA) Systems}}, 
	year={2025},
	volume={73},
	number={10},
	pages={9207-9221},
	doi={10.1109/TCOMM.2025.3554644}}

@article{si2025joint,
	title={{Joint DOA and Attitude Sensing Based on Tri-Polarized Continuous Aperture Array}},
	author={Si, Haonan and Wang, Zhaolin and Guo, Xiansheng and Zhang, Jin and Liu, Yuanwei},
	journal={arXiv preprint arXiv:2510.02029},
	year={2025}
}

@ARTICLE{jiang2025cramer,
	author={Jiang, Hao and Wang, Zhaolin and Liu, Yuanwei and Nallanathan, Arumugam},
	journal={IEEE Trans. Wireless Commun.}, 
	title={Cramér–Rao Bound Optimization for Near-Field Sensing With Continuous-Aperture Arrays}, 
	year={2026},
	volume={25},
	number={},
	pages={7032-7047},
	doi={10.1109/TWC.2025.3628549}}

@ARTICLE{zhang2025exploiting,
	author={Zhang, Yue and Ouyang, Chongjun and Shan, Hangguan and Liu, Yuanwei and Zhou, Yong and Shi, Zhiguo},
	journal={IEEE Trans. Wireless Commun.}, 
	title={{Exploiting Continuous-Aperture Arrays in Integrated Sensing and Communication Systems}}, 
	year={2025},
	volume={24},
	number={11},
	pages={9539-9555},
	doi={10.1109/TWC.2025.3573872}}

@article{zhao2026continuous,
	title={{Continuous-Aperture Array-Based ISAC Over Fading Channels}},
	author={Zhao, Boqun and Ouyang, Chongjun and Zhang, Xingqi and Liu, Yuanwei},
	journal={arXiv preprint arXiv:2603.03184},
	year={2026}
}

@ARTICLE{ye2026optimal,
	author={Ye, Junjie and Wang, Zhaolin and Liu, Yuanwei and Zhang, Peichang and Huang, Lei and Nallanathan, Arumugam},
	journal={IEEE Trans. Wireless Commun.}, 
	title={{Optimal Waveform Design for Continuous Aperture Array (CAPA)-Aided ISAC Systems}}, 
	year={2026},
	volume={25},
	number={},
	pages={19082-19098},
	doi={10.1109/TWC.2026.3702839}}

@article{masouros2015exploiting,
	title={{Exploiting known interference as green signal power for downlink beamforming optimization}},
	author={Masouros, Christos and Zheng, Gan},
	journal = {IEEE Trans. Signal Process.},
	volume={63},
	number={14},
	pages={3628--3640},
	year={2015},
	publisher={IEEE}
}

@article{sanguinetti2022wavenumber,
	title={{Wavenumber-division multiplexing in line-of-sight holographic MIMO communications}},
	author={Sanguinetti, Luca and D’Amico, Antonio Alberto and Debbah, Merouane},
	journal = {IEEE Trans. Wireless Commun.},
	volume={22},
	number={4},
	pages={2186--2201},
	year={2022},
	publisher={IEEE}
}
\end{document}